\def\algbackskip{\hskip-\ALG@thistlm}
\newtheorem{theorem}{Theorem}[section]
\newtheorem{lemma}[theorem]{Lemma}
\newcommand{\remove}[1]{}
\theoremstyle{definition}
\theoremstyle{plain}
\newcommand{\qqed}{\hfill $\blacksquare$}
\newcommand{\lyxmathsym}[1]{\ifmmode\begingroup\def\b@ld{bold}
  \text{\ifx\math@version\b@ld\bfseries\fi#1}\endgroup\else#1\fi}
\newtheorem{def-energy-spanner}[root_id]{Definition}
\newtheorem{def-distance-spanner}[root_id]{Definition}
\begin{document}

%
\title{Privacy Vulnerabilities of Dataset Anonymization Techniques}

\author{Eyal Nussbaum and Michael Segal\\
Communication Systems Engineering Department \\ Ben Gurion
University, Beer-Sheva 84105, Israel} 
\maketitle             
\begin{abstract}
Vast amounts of information of all types are collected daily about people by
governments, corporations and individuals. The information is collected, for
example, when users register to or use on-line applications, receive health
related services, use their mobile phones, utilize search engines, or perform
common daily activities. As a result, there is an enormous quantity of
privately-owned records that describe individuals' finances, interests,
activities, and demographics. These records often include sensitive data and
may violate the privacy of the users if published. The common approach to
safeguarding user information, or data in general, is to limit access to the
storage (usually a database) by using and authentication and authorization
protocol. This way, only users with legitimate permissions can access the user
data. However in many cases the publication of user data for statistical
analysis and research can be extremely beneficial for both academic and
commercial uses, such as statistical research and recommendation systems. To
maintain user privacy when such a publication occurs many databases employ
anonymization techniques, either on the query results or the data itself (Adam
and Worthmann \cite{priv-compare}). In this paper we examine variants of $2$
such techniques, ``data perturbation'' and ``query-set-size control'', and
discuss their vulnerabilities. The data perturbation method deals with changing
the values of records in the dataset while maintaining a level of accuracy over
the resulting queries. We focus on a relatively new data perturbation method
called NeNDS \cite{cf-obfuscation} and show a possible partial
knowledge privacy attack on this method. The query-set-size control allows
publication of a query result dependent on having a minimum set size, $k$, of
records satisfying the query parameters. We show some query types relying on
this method may still be used to extract hidden information, and prove others
maintain privacy even when using multiple queries.
\end{abstract}
\newpage
\section{Introduction}
In today's world many organizations and individuals constantly gather
information about people, whether directly or indirectly. This leads to enormous
databases storing private information regarding individuals' personal and
professional life. Commonly, access to these records is limited and safeguarded
using authorization and authentication protocols. Only authorized users may
query the system for data. There are however instances in today's
global network of organizational connections, the growing demand to disseminate
and share this information is motivated by various academic, commercial and
other benefits. This information is becoming a very important resource for many
systems and corporations that may analyze the data in order to enhance and
improve their services and performance. The problem of privacy-preserving data
analysis has a long history spanning multiple disciplines. As electronic data
about individuals becomes increasingly detailed, and as technology enables ever
more powerful collection and curation of these data, the need increases for a
robust, meaningful, and mathematically rigorous definition of privacy, together
with a computationally rich class of algorithms that satisfy this definition. A
comparative analysis and discussion of such algorithms with regards to
statistical databases can be found in \cite{priv-compare}. One common practice
for publishing such data without violating privacy is applying regulations,
policies and guiding principles for the use of the data. Such regulations
usually entail data distortion for the sake of anonymizing the data. In recent
years, there has been a growing use of anonymization algorithms based on
\emph{differential privacy} introduced by Dwork et al. \cite{dif-priv}.
Differential privacy is a mathematical definition of how to measure the privacy
risk of an individual when they participate in a database. To construct a data
collection or data querying algorithm which constitutes differential privacy,
one must add some level of noise to the collected or returned data respectively.
While ensuring some level of privacy, these methods still have several issues
with regards to implementation and data usability. Sarwate and Chaudhuri
\cite{dif-challenges} discuss the challenges of differential privacy with regards to
continuous data, as well as the trade-off between privacy and utility. In some cases, the data may become
unusable after distortion. Lee and Clifton \cite{eps-choose} discuss the
difficulty of correctly implementing differential privacy with regards to the
choice of $\epsilon$ as the differential privacy factor. Due to these issues and
restrictions, other privacy preserving algorithms are still in prevalent in many
databases and statistical data querying systems. In this paper, we address
vulnerabilities of several implementations of such privacy preserving
algorithms.
\\
The vulnerability
of databases, and hence the potential avenues of attack, depend among other
things on the underlying data structure (and query behavior). The information stored in  databases also
comes in many forms, such as plain text, spatial coordinates, numeric values,
and others. Each combination of structure and data format allows for its own
specific attack and requires its own unique handling of privacy protection.
Another factor when handling privacy in databases is the type of queries
allowed (which may be dictated by the previously mentioned structure and data
format). For example, datasets with timestamp values may only allow
min/max and grouping queries, while those containing sequential numeric values
may also allow queries regarding averages, sums, and other mathematical
formulas. In Section~\ref{sec:combine} we analyze the effectiveness of
different queries using the $k$-query-set-size limitation over aggregate
functions in maintaining individual user privacy in a vehicular network.
\\
Another field where privacy concerns are a growing issue is the field of
recommendation systems. Many of these systems use the collaborative filtering
technique, in which users are required to reveal their preferences in order to
benefit from the recommendations. Su et al. \cite{cf-survey} survey these
techniques in depth. Several methods aimed at hiding and anonymizing user
data have been proposed and studied in an attempt to reduce the privacy
issues of collaborative filtering. These methods include data obfuscation,
random perturbation, data suppression and others
\cite{cf-obfuscation,cf-perturbation,cf-suppression,cf-factor}.
Most of these methods rely on experimental results alone to show effectiveness, and some have already
been shown to have weaknesses that can be exploited in order to recover the
original user data \cite{cf-derive,cf-properties}. Parameswaran and Blough
\cite{cf-obfuscation} propose a new data obfuscation technique dubbed ``Nearest
Neighbor Data Substitution'' (NeNDS). In Section \ref{sec:cf} we detail a
privacy attack on NeNDS based on partial prior information, as well as address
shortcomings in the NeNDS algorithm and propose avenues of research for its
improvement. Finally, we conclude in Section 4.
\section{Combining Queries with $k$-Limited Results}\label{sec:combine} 
The underlying data structure of a database is one of the factors in
determining the querying methods used over the database. The database logic
itself may further restrict queries, in some cases allowing for querying a
specific key and in others only returning aggregate results over a set of
values. The data type stored may also be a factor when discerning which
querying methods may be used. Numeric values can allow for mathematical queries
such as sums, averages and medians. Text fields may allow for string operations
such as ``contains'', ``starts-with'', or even regular expressions. In the same
manner, these queries may also be prohibited as they may convey information
that is meant to remain private. Other limitations may be placed on queries as
well, such as the query-set-size limitation, blocking query results in cases
where a predefined number ($k$) of record look-ups have not been reached (i.e.
the number of users/items taken into consideration by the query are less than
$k$). Venkatadri et al. \cite{facebook-pii} recently demonstrated a Privacy
attack on Facebook users by utilizing an exploit in Facebook's targeted
advertising API which similarly restricted query results containing too few
users. Using a combination of multiple queries which returned aggregate results
(or no results due to a low number of users matching the query), the
researchers were able to narrow down personally identifiable information which
was regarded as private by the users. In this section we look at such cases and
attempt to determine whether an attacker can use a combination of allowed
queries in order to extract information which the prohibited queries mean to
block. This may be done using multiple queries of the same type, or a
combination of several query types. \\

\subsection{Dataset and Query Models}
We attempt to show privacy attacks on data gathered from vehicular networks. The
gathered data is stored in a centralized database which allows a set of queries
that are designed to return meaningful information without compromising that
privacy of the users. A privacy attack is defined as access to any information
gathered from the user that was not made available from standard queries on the
database.
\subsubsection{Graph Datasets Model}\mbox{} \\
A vehicular network is comprised of $n$ unique units distributed in the real
world and are displayed on a graph $G$ as a set of vertices $V$ such that each
vertex $v_i, i=(1,2,3,\ldots,n)$ represents one vehicle at a single (discrete)
point in time $t_j$. The timestamps are measured as incremental time steps from
the system's initial measurement designated $t_0 = 0$. We consider three different
graph models:
\begin{itemize}
  \item A linear graph $G_L$ with vehicles distributed along discrete
  coordinates on the $X$ axis between $-\infty, \infty$. 
  \item A two-dimensional planar graph $G_P$ with vehicles distributed along
  discrete coordinates on the $X$ and $Y$ axis between $-\infty, \infty$.
  \item A three-dimensional cubic graph $G_C$ with vehicles distributed along
  discrete coordinates on the $X$, $Y$ and $Z$ axis between $-\infty, \infty$.
\end{itemize}

For each vehicle $v_i$ at each timestamp, the speed is measured. We denote this
$S_{i,t}$ with $t$ being a discrete value timestamp.  

\subsubsection{Query Model}\mbox{} \\
Following are the set of queries allowed over the database.
\begin{itemize}
  \item $F_{avg} = f(R,\tau) \to S_{avg}$: given a range $R$ a timestamp $\tau$,
  return the average speed $S_{avg}$ over all vehicles in the given range at
  the given time.
  \item $F_{max} = f(R,\tau) \to S_{max}$ given a range $R$ a timestamp $\tau$,
  return the max speed $S_{avg}$ over all vehicles in the given range at the
  given time.
  \item $F_{min} = f(R,\tau) \to S_{min}$ given a range $R$ a timestamp $\tau$,
  return the min speed $S_{min}$ over all vehicles in the given range at the
  given time.
  \item $F_{med} = f(R,\tau) \to S_{med}$ given a range $R$ a timestamp $\tau$,
  return the median speed $S_{med}$ over all vehicles in the given range at the
  given time.
\end{itemize}
The range $R$ is defined by a set of boundaries over the relevant graph:
\begin{itemize}
  \item $R^L$ in $G_L$: A starting coordinate $x_{start}$ and end coordinate
  $x_{end}$.
  \item $R^P$ in $G_P$: A rectangle with corners $(x_1, y_1),(x_2, y_2)$.
  \item $R^C$ in $G_C$: A box with corners $(x_1, y_1, z_1),(x_2, y_2, z_2)$.
\end{itemize}

In order to protect user privacy, all queries deal with measurements over
aggregated data so as not to indicate a single user's information. As such, the
queries only return a result if at least $k$ unique values have been recorded
for the scope over which the query has been run, where $k < n$. The value $k$ is
known to the attacker, however the number of records which were a part of each
query result is not (i.e. the attacker only knows that if a result returned
there are at least $k$ records in the requested scope $R,\tau$, but not the
exact number).

\subsection{Analysis of $F_{avg}$}

In this section we present privacy attack problems over different graphs and
queries.
\subsubsection{Linear vehicular placement}\label{sec:avg_attack}\mbox{} \\
Model: A linear graph $G_L$ with $n$ vehicles.\\
Queries: $F_{avg}$.\\
Attack: find the speed of a single vehicle $v_{target}$ at a given time
$\tau$.\\

It is easy to see that a single query will not constitute an attack. The attack
can be performed using the following algorithm:
\begin{itemize}
  \item Select a range $R^L_1$ with $x_{start_1} = x_{v_{target}}, x_{end_1} =
  \infty$.
  \item Run query $F(R^L_1,\tau)$ and denote the result $S_{avg1}$.
  \item Select a new range $R_2$ with $x_{start_2} = x_{v_{target}} + 1,
  x_{end_2} = \infty$.
  \item Run query $F(R^L_2,\tau)$ and denote the result $S_{avg2}$.
  \item Continue querying over ranges, each time incrementing $x_{start_i}, 1 <
  i < n$ until a result isn't returned. Mark the last coordinate which returned a
  result as $x_{end}$ and the result returned as $S_{avg(k)}$. Note that
  there were $k$ records in this scope.
  \item You can now backtrack over all results and calculate the speed of each
  vehicle between $x_{s_1}$ and $x_{end}$.
\end{itemize}
Denote this algorithm $ALG_{S_L}$. We can see that the runtime for this
algorithm is the number of query iterations required to find a section with
$k-1$ vehicles.

\subsubsection{2D vehicular placement}\mbox{} \\
Model: A two-dimensional planar graph $G_P$ with $n$ vehicles distributed along
discrete coordinates on the $X$ and $Y$ axis between $-\infty, \infty$.\\
Queries: $F_{avg}$.\\
Attack(1): find the speed of a single vehicle $v_{target}$ at a given time
$\tau$. \\
Attack(2): find the average speed of a set $U$ of vehicles, with the size of
$U$ smaller than $k$, $|U| < k$ at a given time $\tau$. \\
Assumptions: The values of $n$ and $k$ are known, where $n >> k$.
We first select some value $y$ on the $Y$ axis, denote this value $Y_{mid}$, 
and split $G_P$ into 3 ranges (the section above $Y_{mid}$, the
section below $Y_{mid}$, and the section containing only $Y_{mid}$):
\begin{itemize}
  \item $R^P_{top}$: $(-\infty, Y_{mid}),(\infty, \infty)$.
  \item $R^P_{bot}$: $(-\infty, Y_{mid}),(\infty, -\infty)$.
  \item $R^P_{mid}$: $(-\infty, Y_{mid}),(\infty, Y_{mid})$.
\end{itemize}
Note that both $R^P_{top}$ and $R^P_{bot}$ contain $R^P_{mid}$, and the union of
$R^P_{top}$ and $R^P_{bot}$ is the entire graph containing all vehicles
($R^P_{n} = R^P_{G_P} = R^P_{top} \cup R^P_{bot}$). We define $\hat{R}^P_{top}$
$\hat{R}^P_{bot}$ to be $R^P_{top} \cap R^P_{mid}$ and $R^P_{bot} \cap
R^P_{mid}$ respectively. See partition example in Figure \ref{fig:vehicle_net_1}.
It is important to note that due to symmetry, this partition can also be done
around some value $x$ on the $X$ axis, with the sections built around this
value $X_{mid}$.

We now perform 5 $F_{avg}$ queries on $G_P$:\\
 $S_{top} = F_{avg}(R^P_{top}, \tau)$, $S_{bot} = F_{avg}(R^P_{bot}, \tau)$,
 $\hat{S}_{top} = F_{avg}(\hat{R}^P_{top}, \tau)$, $\hat{S}_{bot} =
 F_{avg}(\hat{R}^P_{bot}, \tau)$, $S_{n} = F_{avg}(R^P_{n}, \tau)$.\\
 
\begin{figure}
\begin{center}
\includegraphics[scale=0.25]{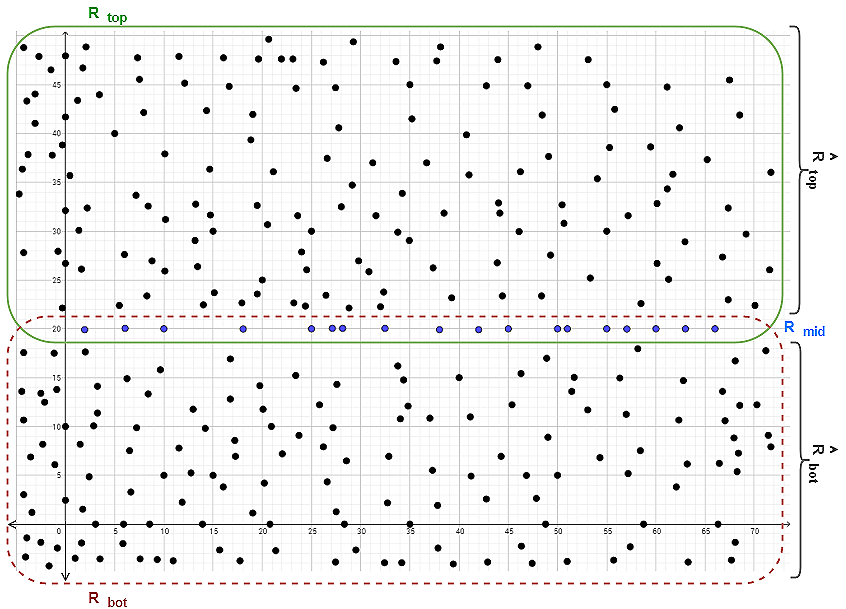}
\caption{Vehicular network $G^P$ range partition around $Y_{mid} = 20$}
\label{fig:vehicle_net_1}
\end{center}
\end{figure}
If one of the selected queries does not return a response (i.e. it contains
less than $k$ vehicles), we re-select $Y_{mid}$ and repeat the process until
all 5 queries are answered (such a value $Y_{mid}$ should exists due to the
size of $n$ and the probable distribution of vehicles).

Using the results $S_{top}, S_{bot}, \hat{S}_{top}, \hat{S}_{bot}, S_{n}$ we
wish to find  the average speed of vehicles in section $S_{mid} =
F_{avg}(R^P_{mid}, T)$, and the number of vehicles in each section: $R^P_{top},
R^P_{bot}, \hat{R}^P_{top}, \hat{R}^P_{bot}, R^P_{mid}$. The number of
vehicles in each section is a function of the section range and a given
timestamp: $C(R,\tau)$. We denote the number of vehicles in each section as
follows:
\begin{enumerate}
	\item $T = C(R^P_{top},\tau)$.
	\item $B = C(R^P_{bot},\tau)$.
	\item $M = C(R^P_{mid},\tau)$.
	\item $\hat{T} = C(\hat{R}^P_{top},\tau) = T - M$.
	\item $\hat{B} = C(\hat{R}^P_{bot},\tau) = B - M $.
\end{enumerate}

To do so, we solve the following equation system:
\begin{itemize}
  \item $n \cdot S_{n} = B \cdot S_{bot} + T \cdot S_{top} -
  M \cdot S_{mid}$.
  \item $B + T - M = n$.
  \item $T \cdot S_{top} = \hat{T} \cdot \hat{S}_{top} + M \cdot S_{mid} =
  (T-M) \cdot \hat{S}_{top} + M \cdot S_{mid}$.
  \item $B \cdot S_{bot} = \hat{B} \cdot \hat{S}_{bot} + M \cdot S_{mid} = (B-M)
  \cdot B \cdot \hat{S}_{bot} + M \cdot S_{mid}$.
\end{itemize}

Solving this system gives us the following:
\begin{itemize}
  \item $B = \frac{n \cdot (S_n - \hat{S}_{top})}{S_{bot} - \hat{S}_{top}}$.
  \item $T = \frac{n \cdot (S_n - \hat{S}_{bot})}{S_{top} - \hat{S}_{bot}}$.
  \item $M = \frac{n \cdot [S_n \cdot (\hat{S}_{top} + \hat{S}_{bot} - S_{bot} -
  S_{top}) + S_{bot} \cdot S_{top} - \hat{S}_{bot} \cdot \hat{S}_{top}]}{(\hat{S}_{bot} -
  S_{top}) \cdot S_{bot} + (S_{top} - \hat{S}_{bot}) \cdot \hat{S}_{bot}}$.
  \item $S_{mid} = \frac{(\hat{S}_{bot} \cdot \hat{S}_{top} - S_{bot} \cdot S_{top})
  \cdot S_n}{S_{bot} \cdot S_{top} + (\hat{S}_{top} +
  \hat{S}_{bot} - S_{top} - S_{bot}) \cdot S_n - \hat{S}_{bot} \cdot
  \hat{S}_{top}} + \\
  \frac{(S_{top} - \hat{S}_{top}) \cdot S_{bot} \cdot
  \hat{S}_{bot} + (S_{bot} - \hat{S}_{bot}) \cdot S_{top} \cdot \hat{S}_{top})}{S_{bot} \cdot S_{top} + (\hat{S}_{top} +
  \hat{S}_{bot} - S_{top} - S_{bot}) \cdot S_n - \hat{S}_{bot} \cdot
  \hat{S}_{top}}$.
\end{itemize}
Denote this process $ALG_{S_P}$. The runtime for this algorithm is the
equivalent to running 5 queries on the dataset, with the addition of solving
above equation system.\\

With these values we can now attempt Attack(1) and Attack(2):
If $M < k$, we have succeeded in Attack(2).
If $M > k$, we can run $ALG_{S_L}$ on $R^P_{mid}$ which represents the
boundaries of a linear graph, we can select on any vehicle with $k$
vehicles on either side of it as the target vehicle $v_{target}$ and perform
Attack(1). If $M = k$, we cannot complete either attack, so we select a new
value $Y_{mid}$ and run $ALG_{S_P}$ again. There exists an edge case of graphs
where for all values of $y$ that we can choose as $Y_{mid}$, the number of
vehicles $M$ will be equal to $k$, in which case we will be unable to perform
any attack. This scenario is, however, unlikely in the case of vehicular
networks. In addition, since we have the number of vehicles $T$ and $B$ in
$R^P_{top}$ and $R^P_{bot}$ respectively, if these values are sufficiently
large in relation to $n$, we can look at these ranges as sub-graphs of $G_P$
and run $ALG_{S_P}$ on them with $n_{T} = T$ and $n_{B} = B$.\\
It is easy to see that we can apply the same method used on the two dimensional
graph $G_P$ on the three dimensional graph $G_C$ with some minor modifications
as follows. We again select some value $y$ on the $Y$ axis, denote this value
$Y_{mid}$, and split $G_C$ into 3 ranges (the section above $Y_{mid}$, the
section below $Y_{mid}$, and the section containing only $Y_{mid}$). In this
instance, these sections are represented as cubes in the following manner:
\begin{itemize}
  \item $R^C_{top}$: $(-\infty, Y_{mid}, -\infty),(\infty, \infty, \infty)$.
  \item $R^C_{bot}$: $(-\infty, Y_{mid}, -\infty),(\infty, -\infty, \infty)$.
  \item $R^C_{mid}$: $(-\infty, Y_{mid}, -\infty),(\infty, Y_{mid}, \infty)$.
\end{itemize}
Similarly, we define $\hat{R}^C_{top}$, $\hat{R}^C_{bot}$ to be $R^C_{top} \cap
R^C_{mid}$ and $R^C_{bot} \cap R^C_{mid}$ respectively. Note that after running
our five queries on the five sections, we achieve the same linear equations as
in the two dimensional case. Solving these equations now leaves us with the
average speed over the plane defined by $R^C_{mid}$, and the number of vehicles
$M$ in this plane. As in the two dimensional case, if $M<k$ we have succeeded
in Attack(2). If $M>k$ we now have a sub-graph of $G^C$ which constitutes a two
dimensional graph $G_P$ on which we may be able to perform $ALG_{S_P}$. The
minimum size $M$ for this to be possible is $M=2k+1k$.\\
While our results, given as $ALG_{S_L}, ALG_{S_P}$ and $ALG_{S_C}$, refer to the
average speeds of vehicles in their respective graph placements, they are not
limited to speed values. The same methods can be used for any numeric value that
can be averaged over a set of vehicles in this manner, such as number of
traffic violations a vehicle has accumulated, number of accidents the
vehicle has been involved in, and so on. Any of these, when given as averages
over a set of vehicles may appear innocent and maintain high level of privacy
for an individual in the system. However, as we have shown, an individual's
data can be inferred with minimal effort by employing our methods. Of course, we
are also not limited to vehicular networks. Any data set with the same structure
of node placement in a graph will yield the same results.

\subsection{Analysis of $F_{min}, F_{max}$ and $F_{med}$}
In this section we look at possible attacks using the minimum, maximum and
median value queries over ranges in the graph as defined previously by
$F_{min}, F_{max}$ and $F_{med}$ respectively. Similar to the case of
$F_{avg}$, we define that the queries will not return a result if the target
Range $R$ at time $\tau$ contains less than $k$ individual values. In addition,
our analysis of potential attacks rests on the following set of assumptions:
\begin{itemize}
  \item The data set consists of $n$ unique values.
  \item The value $k$ is known to the attacker.
  \item In case a result is returned, the number of actual values in $(R, \tau)$
  is not known to the attacker.
  \item If $(R, \tau)$ contains an even number of values, $F_{med}$ returns the
  lower of the $2$ median values.
  \item The attacker is limited only to the $F_{min}, F_{max}$ and $F_{med}$ queries, but can perform
  any number of queries over the data set.
\end{itemize}
For simplicity, we will treat the data set as in the previous section - a
linear graph $G_L$ representing a snapshot in time $\tau$ of recorded speeds of
vehicles in a specified area. A query of type $q$ ($q$ being $min, max$ or
$med$) at time $\tau$ over a range beginning at $x_i$ and ending at $x_j$
(inclusive) will be denoted $F_q([x_i, x_j])$.

 We note that there are several special cases in which a trivial attack can be
 performed. We will address these cases before moving on to the general case.
\subsubsection{Case 1: Global Min/Max}\mbox{} \\
Since there exist a unique global minimum and global maximum in the graph, it is
easy to see that by querying over the entire graph and iteratively decreasing
the range until a new minimum/maximum is found, the vehicle with the minimum and
maximum speeds can be discovered.
\subsubsection{Case 2: $k-local$ Min/Max}\mbox{} \\
Similar to the case of a global min/max, if a vehicle has the local minimum or
maximum value with regards to his $k$ nearest neighbors then their speed can be
discovered. This is done using the same method as stated for the global min/max.
A range consisting of $k+1$ vehicles, with the outer vehicle having a min (max)
speed in that group must be found. Once found, decrease the range until a group
of size $k$ remains in its bounds. By our $k-local$ definition, the min (max)
value now changes, and the attacker knows that the previous value belongs to the
vehicle that has been removed from the range. Note that if a such a $k-local$
min/max vehicle exists in the graph, the attacker can find it given enough
queries.
\subsubsection{Case 3: $k = 3$}\mbox{} \\
In this case, since all values are defined to be unique, querying $F_{min},
F_{max}, F_{med}$ on a range containing exactly $3$ vehicles return $3$ values,
each belonging to a specific vehicle. An attacker can query over a single
coordinate at the left-most side of the graph and increase the range until a
result is returned. The first time a result is returned, the minimum $k = 3$
group size has been reached, and the attacker has the speed of each of the $3$
vehicles. Each speed cannot be attributed to a specific vehicle, but we will
denote these $3$ values $s_1, s_2, s_3$. The attacker now decreases the range's
size from the left until no result is returned, this indicates the range now
only contains $2$ vehicles. Increasing the range to the right until a result is
returned indicates that a new vehicle has been added to the range. Since all
values are unique, one of the values $s_1, s_2, s_3$ will be missing from the
results. This belongs to the left-most vehicle from the previous query results.
Continuing this method until the entire graph has been scanned will reveal the
speeds of each vehicle in the graph.
\subsubsection{The General Case: $k \geq 4$}\mbox{} \\
We show that for the general case, there exists a linear placement of vehicles
such that at least $1$ vehicle will have a speed whose value will remain hidden
from an attacker. Note that if a combination of $F_{min}, F_{max}, F_{med}$
queries can be used to attain the same results as the query $F_{avg}$, then a
privacy attack can be performed in the manner detailed in Section
\ref{sec:avg_attack}. Hozo et al.
\cite{min-max-med} devise a method to estimate the average value and variance of
a group using knowledge of only the minimum, maximum and median values. However,
for the attack described in $ALG_{S_L}$ to succeed, the actual average value is required
and not just an estimate. We use an adversarial
model and show that for any number $n$ of vehicles and any minimal query size
$k$, a vehicle arrangement can be created in which the attacker, using any
combination of the above mentioned queries, lacks the ability to discover the
speed of at least $1$ vehicle. For any value of $k \geq 4$ and $n \geq k$ we
prove this for a specific vehicle $v_1$ placed at the leftmost occupied coordinate on
the $X$ axis (denoted $x_1$). For any value of $k \geq 4$ and $n \geq 2k$ we
prove this vehicle may be at any coordinate.
\begin{lemma}\label{lemma-k-4}
Let $V$ be a set of $n$ vehicles $v_1, v_2, \ldots, v_n$ positioned along a linear
graph at coordinates $x_1, x_2, \ldots, x_n$ at time $\tau$. If $k \geq 4$, for
any value $n \geq k$ there exists a corresponding assignment of speeds $s1,
s2, \ldots, s_n$, such that the speed $s_1$ of $v_1$ cannot be determined by
any attacker with access to the $F_{min}, F_{max}$ and $F_{med}$ queries over
the graph.
\end{lemma}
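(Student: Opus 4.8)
The plan is to give an adversarial indistinguishability argument: I will exhibit (at least) two complete speed assignments that differ in the value $s_1$ yet produce identical answers to \emph{every} query the attacker can issue. Since the attacker's entire knowledge consists of the responses to its queries (together with the known quantities $n$, $k$, and the positions), two observationally identical worlds that disagree on $s_1$ make $s_1$ undeterminable.

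First I would restrict attention to the queries that can possibly carry information about $s_1$. Because $v_1$ sits at the leftmost occupied coordinate $x_1$, every range containing $v_1$ has the form $[x_1, x_j]$, and such a range is answered only when it covers at least $k$ vehicles, i.e. when $j \geq k$. Every other range excludes $v_1$, so its $F_{min}, F_{max}, F_{med}$ values are independent of $s_1$; moreover whether any range returns an answer depends only on how many vehicles it covers, which is fixed by the positions and is unaffected by reassigning $s_1$. Hence the only information the attacker can ever extract about $s_1$ comes from the order statistics of the prefixes $[x_1, x_j]$ with $j \geq k$.

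Next I would construct the assignment. Assign to $v_2, \dots, v_n$ arbitrary distinct speeds, with large enough gaps, arranged so that the global maximum speed is attained by some vehicle among $v_2, \dots, v_k$ (so it lies inside every answered prefix), and then set $s_1$ to be the second-largest speed of the whole set, i.e. any value strictly between the global maximum and the third-largest speed. The key claim is that in every prefix $[x_1, x_j]$ with $j \geq k$ the value $s_1$ has sorted rank exactly $j-1$: only the (always present) global maximum exceeds it, so it is never the minimum (rank $1$, since $j-1 \geq k-1 \geq 3$), never the maximum (rank $j$), and never the median (rank $\lceil j/2 \rceil$). The last point is where the hypothesis $k \geq 4$ enters: I would verify $j-1 > \lceil j/2 \rceil$ for every $j \geq 4$, so the rank of $s_1$ stays strictly above the median rank in all admissible prefixes simultaneously (for $k=3$ these ranks collide at $j=3$, consistent with Case~3, where the attack does succeed).

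Finally I would conclude by perturbation. Since $s_1$ never equals the minimum, maximum, or median of any answered range, and since $s_1$ may be moved anywhere in the open interval between the third-largest and the largest speed without altering the sorted order of any prefix, all responses and the set of answered ranges remain unchanged; picking two distinct values in that interval yields two valid worlds that agree on every observable answer but disagree on $s_1$, proving $s_1$ cannot be determined. I expect the main obstacle to be the rank computation of the third step: the median rank $\lceil j/2 \rceil$ drifts as $j$ ranges over $\{k, \dots, n\}$, and the whole argument hinges on finding a \emph{single} slot for $s_1$ that avoids minimum, maximum, and median for all these $j$ at once. Pinning $s_1$ to the second-largest slot is exactly what makes one interval of admissible values work uniformly, and it is precisely the inequality $j-1 > \lceil j/2 \rceil$, valid only for $j \geq 4$, that licenses it.
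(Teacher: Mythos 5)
Your proposal is correct, but it takes a genuinely different route from the paper. The paper proves the lemma by induction on $n$: the base case $n=k=4$ uses the ordering $s_2>s_1>s_3>s_4$, and the inductive step appends a vehicle $v_{N+1}$ at the right end whose speed is chosen \emph{adaptively} (above or below the current median $s_{{med}_N}$, on whichever side keeps the median away from $s_1$), followed by a separate monotonicity step to pass from $k=4$ to all $k\geq 4$. You instead give a single closed-form construction --- place the global maximum among $v_2,\dots,v_k$ and pin $s_1$ to the second-largest slot --- and verify uniformly that in every answered prefix $[x_1,x_j]$ (necessarily $j\geq k\geq 4$) the rank of $s_1$ is $j-1$, which avoids the minimum, maximum, and lower-median ranks precisely because $j-1>\lceil j/2\rceil$ for $j\geq 4$; this handles all $k\geq 4$ at once with no induction and no extrapolation step. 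Your approach also buys something the paper's argument glosses over: the paper effectively equates ``$s_1$ is never the output of any query'' with ``$s_1$ cannot be determined,'' whereas your final perturbation step (sliding $s_1$ within the open interval between the third-largest speed and the maximum produces two worlds with identical responses to every query) establishes genuine indistinguishability, which is the semantically correct notion of undeterminability. What the paper's inductive machinery buys in exchange is reusability: essentially the same append-one-vehicle-and-control-the-median scheme is what carries the harder Lemma~\ref{lemma-2k}, where the hidden vehicle may sit at an arbitrary position rather than at the leftmost coordinate, a case your second-largest-slot trick does not directly cover (an interior $v_j$ is subject to min/max/median queries over ranges that need not contain the planted global maximum).
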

\begin{proof}
We prove by induction for $k = 4$ and $n \geq k$, then extrapolate for $k \geq
4$ and $n \geq k$.
\paragraph{Show Correctness for $k=4, n=4$}\mbox{} \\
With vehicles $v_1, v_2, v_3, v_4$ positioned at $x_1, x_2, x_3, x_4$, set the
values of $s_1, s_2, s_3, s_4$ such that $s_2 > s_1 > s_3 > s_4$. Since $k = n =
4$ the queries will only return results when the range queried contains the
range $[x_1, x_n]$. It is easy to see that:
\begin{itemize}
  \item $F_{min}([x_1, x_n]) = s_4$.
  \item $F_{max}([x_1, x_n]) = s_2$.
  \item $F_{med}([x_1, x_n]) = s_3$.
\end{itemize}
As such, the value $s_1$ of $v_1$ is never revealed.
\paragraph{Assume Correctness for $k=4, n=N$}\mbox{} \\
Given a set of vehicles $v_1, v_2, v_3, \ldots, v_N$ positioned at coordinates
$x_1, x_2, x_3, \ldots, x_N$, assume there exists an assignment of
corresponding speeds $s_1, s_2, s_3, \ldots, s_N$ such that $s_1$ cannot be
determined by an attacker with access to any number of $F_{min}, F_{max},
F_{med}$ queries with a $k = 4$ limitation.

\paragraph{Prove for $k=4, n=N+1$}\mbox{} \\
We assign $s_1, s_2, s_3, \ldots, s_N$ such that for the subgraph $[x_1, x_N]$,
for $k = 4$, the value of $s_1$ is never revealed by any query $F_{min},
F_{max}, F_{med}$. We note $2$ properties regarding $s_{N+1}$ of the node
$v_{N+1}$, placed at $(x_{N+1} \mid x_{N+1} > x_N)$:
\begin{enumerate}
  \item There exists only $1$ queryable range, $[x_1, x_{N+1}]$, for which any
  query will take both $s_1$ and $s_{N+1}$ into consideration.
  \item Regardless of the value of $s_{N+1}$, the queries $F_{min}([x_1,
  x_{N+1}])$ and $F_{max}([x_1, x_{N+1}])$ cannot return $s_1$ as a
  result. (Otherwise, $s_1$ would have been a result of one of the queries over
  the subgraph $[x_1, x_N]$)
\end{enumerate}
Due to these properties, we must only ensure that the query\\
$F_{med}([x_1, x_{N+1}])$ does not return $s_1$ as it's result. Denote
 $s_{{med}_N}$ to be the result of $F_{med}([x_1, x_{N}])$. If $s_{{med}_N} >
 s_1$ then we set $(s_{N+1} \mid s_{N+1} > s_{{med}_N})$ so that $F_{med}([x_1, x_{N+1}]) \geq
s_{{med}_N} > s_1$. Conversely, if $s_{{med}_N} < s_1$ then we set $(s_{N+1}
\mid s_{N+1} < s_{{med}_N})$ so that $F_{med}([x_1, x_{N+1}]) \leq s_{{med}_N} <
s_1$. We now have an assignment $s_1, s_2, s_3, \ldots, s_N, s_{N+1}$ such that
the value $s_1$ cannot be discovered by an attacker.

\paragraph{Extrapolate for $k \geq 4, n \geq k$}\mbox{} \\
The parameter $k$ is defined as the minimum number of vehicles required to be in
a range in order for a result to be returned. For any value of $(n \mid n > k)$ 
increasing the value of $k$ only reduces the number of available queries that
will return a result. Since it holds that there exists an assignment $s_1, s_2,
s_3, \ldots, s_n$ such that $s_1$ cannot be discovered for $k = 4$, then
setting $k = 5$ for the same assignment will not give any new information to
the attacker and $s_1$ will remain unknown. It can be seen that this is true
for any value $k$ such that $n \geq k \geq 4$.
\end{proof}
While Lemma \ref{lemma-k-4} holds for any value of $k \geq 4$ and $n \geq k$,
such an assignment, where a specific node is deterministically undiscoverable,
is susceptible to prior knowledge attacks. In addition, in most real world
cases, the value of $k$ is chosen to be on a level of magnitude lower than $n$
as to allow for many queries. We show that for these cases, specifically any
case where $k \geq 4$ and $n \geq 2k$, the vehicle whose speed is never returned
by any query can be chosen as any vehicle by the adversary.
\begin{lemma}\label{lemma-2k}
Let $V$ be a set of $n$ vehicles $v_1, v_2, \ldots, v_n$ positioned along a linear
graph at coordinates $x_1, x_2, \ldots, x_n$ at time $\tau$. If $k \geq 4$, for
any value $n \geq 2k$ there exists a corresponding assignment of speeds $s1,
s2, \ldots, s_n$, such that there exists a node $v_j$ with speed $s_j$ which 
cannot be determined by any attacker with access to the $F_{min}, F_{max}$ and
$F_{med}$ queries over the graph.
\end{lemma}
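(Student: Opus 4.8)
The plan is to exhibit, for a suitably chosen target $v_j$, a speed assignment under which $s_j$ is never produced as the answer to any query; undeterminability then follows exactly as in Lemma~\ref{lemma-k-4}, since any value strictly between the speeds immediately below and above $s_j$ in sorted order can be substituted for $s_j$ without changing the relative order of the speeds, and hence without changing the min, max, or median of a single queryable range. Thus the whole problem reduces to a combinatorial condition on the rank of $s_j$: for every range $R \ni x_j$ containing at least $k$ vehicles, $s_j$ must be neither the minimum, the maximum, nor the (lower) median of the values in $R$. Writing $m = |R|$ and recalling that the lower median is the $\lceil m/2\rceil$-th smallest value, the requirement is that the number of vehicles in $R$ whose speed is below $s_j$ never equals $0$, $m-1$, or $\lceil m/2\rceil - 1$.

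First I would dispose of the extremal conditions. The key observation is that every queryable range containing $x_j$ has size at least $k \geq 4$, so it necessarily contains at least one of the immediate neighbors $v_{j-1}, v_{j+1}$ and at least one of the second neighbors $v_{j-2}, v_{j+2}$: a block of at least four consecutive positions straddling $j$ cannot simultaneously avoid positions $j-2$ and $j+2$, and cannot avoid both $j-1$ and $j+1$. Hence, by assigning neighbor speeds so that every such range is forced to contain both a value above and a value below $s_j$, one guarantees that $s_j$ is never returned as a min or a max. This settles two of the three conditions almost for free.

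The median condition is the heart of the argument and the step I expect to be the main obstacle. Here the hypothesis $n \geq 2k$ becomes essential: I would choose $j$ so that $v_j$ carries a protective block of at least $k$ vehicles on one side, and apply the construction of Lemma~\ref{lemma-k-4} on that \emph{long side}, so that all ranges lying to that side of $v_j$ already keep $s_j$ off the median. The difficulty is that the median index $\lceil m/2\rceil$ and the rank of $s_j$ both shift as the range is grown or translated by a single vehicle, so a value assignment that dodges the median for windows of size $k$ can collide with it for windows of size $k+1$. Indeed, the naive bracketing of the previous paragraph is forced to make $s_j$ the second-smallest of a tight $k$-vehicle window, which makes it exactly the lower median of an adjacent $(k+1)$-vehicle window. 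I therefore expect the final construction to assign the remaining speeds inductively, adding vehicles outward on the long side and, using the surplus guaranteed by $n \geq 2k$, filling in the short side, at each step choosing the new speed above or below the current median so as to push the running median strictly away from $s_j$.

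The remaining work is bookkeeping: verifying that each inductive choice of a new speed preserves the rank condition simultaneously for every window size containing both $v_j$ and the newly added vehicle, and confirming that the short side, whose length is bounded using $n \geq 2k$, can be completed without reintroducing a median or extremal coincidence. Once the assignment is shown to satisfy the rank condition for all queryable ranges containing $v_j$, the perturbation argument of the first paragraph yields two speed profiles agreeing on every query but differing in $s_j$, which is precisely the claimed undeterminability.
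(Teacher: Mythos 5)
Your overall strategy---reduce undeterminability to the condition that $s_j$ is never returned as the min, max, or (lower) median of any queryable range, handle min/max by bracketing $s_j$ with its nearest neighbors, then build the assignment inductively by adding vehicles outward while dodging the median---has the same shape as the paper's proof, and your perturbation argument (replacing $s_j$ by any value strictly between its order-neighbors changes no query answer) is actually a cleaner justification of the final ``never returned implies undeterminable'' step than the paper makes explicit. However, there is a genuine gap exactly where you file things under ``bookkeeping'': the median condition of the inductive step is never resolved. You correctly observe that naive constructions fail (your $k$-window versus $(k+1)$-window collision), and you then assert that one can choose each new speed so as to ``push the running median strictly away from $s_j$'' simultaneously for every window containing both $v_j$ and the newly added vehicle, but you give neither a rule for making that choice nor an argument that such a choice always exists. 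That simultaneous-avoidance claim is the entire content of the lemma, not bookkeeping. You also supply no base case, so the induction has nothing to stand on, and the bracketing constraints you impose on $s_{j\pm1}, s_{j\pm2}$ for the min/max conditions are never checked to be consistent with whatever assignment the median argument would require.

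For comparison, the paper closes exactly this gap with two concrete ingredients. First, an explicit base case $k=4$, $n=8$: the assignment $s_8 > s_4 > s_3 > s_7 > s_6 > s_1 > s_2 > s_5$ hides $s_3$, verified by exhausting all queryable ranges (Table \ref{tbl:table-2k}). Second, in the inductive step the new vehicle $v_{N+1}$ is appended at the far right, so the only new ranges involving both $x_j$ and $x_{N+1}$ are $[x_i, x_{N+1}]$ with $1 \le i \le j$; the induction hypothesis already rules out $s_j$ as a min or max there. Letting $s_{med}'$ denote the median of such a range $[x_i, x_N]$ whose value is closest to $s_j$, the paper sets $s_{N+1}$ strictly between $s_{med}'$ and $s_j$. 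Since appending a single value moves the median of each range by at most one position in the sorted order, and $s_{N+1}$ now separates $s_j$ from every old median, no new median can land on $s_j$; a symmetric argument handles appending on the left, which is what allows the hidden vehicle to occupy any position. If you want to complete your proposal, this explicit choice of the new speed (or an equally concrete mechanism) is what you must supply, together with a concrete starting configuration at $n = 2k$.
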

\begin{proof}
We prove by induction for $k = 4$ and $n \geq 2k$, then extrapolate for $k \geq
4$ and $n \geq 2k$.
\paragraph{Show Correctness for $k=4, n=8$}\mbox{} \\

With vehicles $v_1, v_2, v_3, \ldots, v_8$ positioned at $x_1, x_2, x_3, \ldots, 
x_8$, set the values of $s_1, s_2, s_3, \ldots, s_8$ such that $s_8 > s_4 > s_3
> s_7 > s_6 > s_1 > s_2 > s_5$.  The value of $s_3$ cannot be
determined by an attacker even by running all possible query combinations on the graph. 
The results of all such possible queries can be see in Table \ref{tbl:table-2k}.

\begin{table}[ht]
  \caption{All Possible Results of $F_{min}, F_{max}$ and $F_{mid}$ with $k=4$
  and $n=8$.}
  \label{tbl:table-2k}  
  \begin{flushleft}
  \begin{tabular}{|l|c|c|c|c|c|}
    \cline{2-6} 
    \multicolumn{1}{}{}&\multicolumn{5}{ |c| }{Range Containing $4$ Vehicles}\\
    \cline{2-6} 
    \multicolumn{1}{c|}{}&$[x_1,x_4]$&$[x_2,x_5]$&$[x_3,x_6]$&$[x_4,x_7]$&$[x_5,x_8]$\\
    \hline 
    $F_{min}$&$s_2$&$s_5$&$s_5$&$s_5$&$s_5$\\
    \hline 
    $F_{max}$&$s_4$&$s_4$&$s_4$&$s_4$&$s_8$\\
    \hline 
    $F_{med}$&$s_1$&$s_2$&$s_6$&$s_6$&$s_6$\\
  \hline 
\end{tabular}
  \begin{tabular}{|l|c|c|c|c|}
    \cline{2-5} 
    \multicolumn{1}{}{}&\multicolumn{4}{ |c| }{Range Containing $5$ Vehicles}\\
    \cline{2-5} 
    \multicolumn{1}{c|}{}&$[x_1,x_5]$&$[x_2,x_6]$&$[x_3,x_7]$&$[x_4,x_8]$\\
    \hline 
    $F_{min}$&$s_5$&$s_5$&$s_5$&$s_5$\\
    \hline 
    $F_{max}$&$s_4$&$s_4$&$s_4$&$s_8$\\
    \hline 
    $F_{med}$&$s_1$&$s_6$&$s_7$&$s_7$\\
  \hline 
\end{tabular}
\begin{tabular}{|l|c|c|c|}
    \cline{2-4} 
    \multicolumn{1}{}{}&\multicolumn{3}{ |c| }{Range Containing $6$ Vehicles}\\
    \cline{2-4} 
    \multicolumn{1}{c|}{}&$[x_1,x_6]$&$[x_2,x_7]$&$[x_3,x_8]$\\
    \hline 
    $F_{min}$&$s_5$&$s_5$&$s_5$\\
    \hline 
    $F_{max}$&$s_4$&$s_4$&$s_8$\\
    \hline 
    $F_{med}$&$s_1$&$s_6$&$s_7$\\
  \hline 
\end{tabular}
\begin{tabular}{|l|c|c|}
    \cline{2-3} 
    \multicolumn{1}{}{}&\multicolumn{2}{ |c| }{Range Containing $7$ Vehicles}\\
    \cline{2-3} 
    \multicolumn{1}{c|}{}&$[x_1,x_7]$&$[x_2,x_8]$\\
    \hline 
    $F_{min}$&$s_5$&$s_5$\\
    \hline 
    $F_{max}$&$s_4$&$s_8$\\
    \hline 
    $F_{med}$&$s_6$&$s_7$\\
  \hline 
\end{tabular}
\begin{tabular}{|l|c|}
    \cline{2-2} 
    \multicolumn{1}{}{}&\multicolumn{1}{ |c| }{Range Containing $8$ Vehicles}\\
    \cline{2-2} 
    \multicolumn{1}{c|}{}&$[x_1,x_8]$\\
    \hline 
    $F_{min}$&$s_5$\\
    \hline 
    $F_{max}$&$s_8$\\
    \hline 
    $F_{med}$&$s_6$\\
  \hline 
\end{tabular}
\end{flushleft}
\end{table}
\paragraph{Assume Correctness for $k=4, n=(N \mid N\geq2k$)}\mbox{} \\
Given a set of vehicles $v_1, v_2, v_3, \ldots, v_N$ positioned at coordinates
$x_1, x_2, x_3, \ldots, x_N$, assume there exists an assignment of
corresponding speeds $s_1, s_2, s_3, \ldots, s_N$ such that there exists
some value $s_j$ belonging to some vehicle $v_j$ at position $x_j$, which
cannot be determined by an attacker with access to any number of $F_{min},
F_{max}, F_{med}$ queries under a $k = 4$ limitation.

\paragraph{Prove for $k=4, n=(N+1 \mid N\geq2k)$}\mbox{} \\
We assign $s_1, s_2, s_3, \ldots, s_N$ such that for the subgraph $[x_1, x_N]$,
for $k = 4$, there exists some value of $s_j$  which is never revealed by any
query $F_{min}, F_{max}, F_{med}$. Assume $s_j \in (s_1, s_2, s_3, \ldots,
s_{N-3})$. We note $2$ properties regarding $s_{N+1}$ of the node
$v_{N+1}$, placed at $(x_{N+1} \mid x_{N+1} > x_N)$:
\begin{enumerate}
  \item Regardless of the value of $s_{N+1}$: $\forall x_i \mid x_i \in [x_1,
  x_j] \Rightarrow F_{min}([x_i, x_{N+1}]) < s_j$. (i.e. $s_j$ cannot be the
  result of any $F_{min}$ query in the range $[x_1, x_{N+1}]$)
  \item regardless of the value of $s_{N+1}$ : $\forall x_i \mid x_i \in [x_1,
  x_j] \Rightarrow F_{max}([x_i, x_{N+1}]) > s_j$. (i.e. $s_j$ cannot be the
  result of any $F_{max}$ query in the range $[x_1, x_{N+1}]$)
\end{enumerate}
Therefore, we must only assign $s_{N+1}$ such that it does not cause $s_j$ to be
the result of any $F_{med}$ query.
Define $s_{{med}_i}$ to be the result of $F_{med}([x_i, x_N])$. Due to the
properties of $F_{med}$, if $med_1 > s_j$ then $\forall i \mid 1 \leq i \leq j
\Rightarrow med_i > s_j$. Conversely, if $med_1 < s_j$ then $\forall i \mid 1
\leq i \leq j \Rightarrow med_i < s_j$. Otherwise at least one of those queries
would have returned $s_j$ as a result, which contradicts the induction assumption.
Define $s_{med}'$ to be the closest median value to $s_j$ from the previously
stated queries. 
\[s_{med}' = F_{med}([x_y, x_N]) \mid \forall i, 1 \leq i \leq j
\Rightarrow |s_j - s_{{med}_i}| \geq |s_j - s_y|\] 
We set $s_{N+1}$ to be some
uniformly distributed random value between $s_{med}'$ and $s_j$. We now look at
$F_{med}([x_i, x_{N+1}])$ and note that for any value $i \mid 1 \leq i \leq j$,
the results of $F_{med}([x_i, x_{N}])$ and $F_{med}([x_i, x_{N+1}])$ are either
the same value or adjacent values, as the speeds in the range differ by exactly
$1$ value. Since no value $s_{{med}_i}$ is adjacent to $s_j$, then $s_j$ cannot
be the result of any value $F_{med}([x_i, x_{N+1}], 1 \leq i \leq j)$. There
exist no other queries of the type $f_{med}$ which contain both $x_j$ and
$x_{N+1}$, therefore we now have an assignment $s_1, s_2, s_3, \ldots, s_N,
s_{N+1}$ such that the value $s_j$ cannot be discovered by an attacker.\\
The above holds for the assumption $s_j \in (s_1, s_2, s_3, \ldots,
s_{N-3})$. It is easy to see that due to symmetry, the case where $s_j \in
(s_4, s_5, s_6, \ldots, s_{N})$ allows us to shift all values of $S$ one
vehicle to the right, and assign the random value between $s_{med}'$ and $s_j$
to $s_1$. This completes correctness for all positions of $v_j$.

\paragraph{Extrapolate for $k \geq 4, n \geq 2k$}\mbox{} \\
Similar to \ref{lemma-k-4}, increasing $k$ for a given value of $n$ only reduces
the amount of information available to the attacker. Therefore, if a value $s_j$
exists for an assignment in a graph with $n$ vehicles under the $k = 4$
limitation (with $n \geq 2k$), it will exist for any value of $k$ such  that  $4
\leq k \leq \frac{n}{2}$.

\end{proof}
%


\section{Collaborative Filtering}\label{sec:cf}

Collaborative filtering (CF) is a technique commonly used to build personalized
recommendations on the Web. In collaborative filtering, algorithms are used to
make automatic predictions about a user's interests by compiling preferences
from several users. In order to provide personalized information to a user, the
CF system needs to be provided with sufficient information regarding his or her
preferences, behavioral characteristics, as well as demographic information of
the individual. The accuracy of the recommendations is dependent largely on how
much of this information is known to the CF system. However, this information
can prove to be extremely dangerous if it falls in the wrong hands. Several
methods aimed at hiding and anonymizing user data have been proposed and 
studied in an attempt to reduce the privacy issues of collaborative filtering.
Among these methods is the data obfuscation technique ``Nearest Neighbor Data
Substitution'' (NeNDS) proposed by Parameswaran and Blough in
\cite{cf-obfuscation}. Using this approach, items in each column of the database
are clustered into groups by closeness of their values, and a substitution
algorithm is applied to each group. The algorithm gives each item a new location
within the group such that each item now corresponds to a new row in the
original database. The relative closeness in values of the substituted items
allows for the recommendation system to maintain a good degree of approximation
when the CF algorithm is applied to obtain recommendations, while the
substitution itself offers a level of privacy by hiding the original values
associated with each individual user. In this section, we show the possibility of
a privacy attack on the substituted database by an attacker with partial
knowledge of the original data.

\subsection{The NeNDS Algorithm}
The Nearest Neighbor Data Substitution (NeNDS) technique
is a lossless data obfuscation technique that preserves the privacy of
individual data elements by substituting them with one of their Euclidean space
neighbors. NeNDS uses a permutation-based approach in which groups of similar
items undergo permutation. The permutation approach hides the original value of
a data item by substituting it with another data item that is similar to it but
not the same. NeNDs treats each column in the database as a separate dataset.
The first step in NeNDS is the creation of similar sets of items called
neighborhoods. These items contained in each neighborhood are selected in a
manner that maintains Euclidean closeness between neighbors using some distance
measuring function suited to the data. Each data set is divided into a
pre-specified number of neighborhoods. The items in each neighborhood are then
permuted in such a way that each item is displaced from its original position,
no two items undergo swapping, and the difference between the values of the
original and the obfuscated items is minimal. The number of neighbors in each
neighborhood is denoted $NH_{size}$, with $3 \leq NH_{size} \leq N$ where $N$
is the number of items in the dataset (this is due to the fact that $NH_{size}
= 1$ does not allow any permutation and $NH_{size} = 2$ is the trivial case of
swapping between 2 items and easily reversible).\\
The substitution process is performed by determining the optimal permutation
set subject to the following conditions:
\begin{itemize}
  \item No two elements in the neighborhood undergo swapping.
  \item The elements are displaced from their original position.
  \item Substitution is not performed between duplicate elements.
\end{itemize}
The permutation mapping is done by creating a tree depicting all possible
permutation paths and selecting the path with the minimal maximum distance
between any 2 substitutions. For example, we look at the case of the
neighborhood $[75, 77, 82, 70]$. The optimal path for substitution would be
$70 \rightarrow 77 \rightarrow 82 \rightarrow 75 \rightarrow 70$ with the new
neighborhood order being $[82, 70, 77, 75]$ and the maximal difference between
any 2 substituted items being $(70 \rightarrow 77$ and $82 \rightarrow 75)$.
Once the substitutions in each neighborhood is complete, the column of the
original database is replaced with column containing the new item positions.
The detailed algorithm can be found in \cite{cf-obfuscation}.
Note that this algorithm is deterministic for any given value of $NH_{size}$,
and will yield the same permutations given any original order of the original
dataset.

\subsection{Privacy Attack on NeNDS}\label{sec:cf-attack}
In this section we will show an attack on a NeNDS permutated database by an
attacker with partial knowledge of the original database, specifically the
attacker knows the original position of at least $NH_{size} - 2$ items in each
neighborhood. The attack is performed under the following assumptions:
\begin{itemize}
  \item The attacker has complete knowledge of the NeNDS algorithm.
  \item The attacker knows the neighborhood size, $NH_{size}$ used by the
  algorithm.
  \item The attacker can measure the Euclidean distance between the items in the
  database.
  \item The attacker has access to the output permutated database (i.e. the new
  positions of all items).
\end{itemize}
We will show the attack for a single dataset (column), however since the
algorithm is performed independently for each dataset, this can be extended to
the entire database. For a given dataset of size $n$, we define the following
notations:
\begin{itemize}
  \item Let $X$ be the original dataset $[x_1, x_2, \ldots, x_n]$.
  \item Let $Y$ be the NeNDS obfuscated dataset $[y_1, y_2, \ldots, y_n]$.
  \item Let $X_p$ be the original data items in the $p^{th}$ neighborhood,
  $[x_{p1}, x_{p2}, \ldots, x_{pn}]$.
  \item Let $Y_p$ be the obfuscated data items in the $p^{th}$ neighborhood,
  $[y_{p1}, y_{p2}, \ldots, y_{pn}]$.
  \item Let $u_{p1}, u_{p2}$ be the 2 items in $X_p$ whose original position is
  unknown to the attacker.
\end{itemize}
The attack is successful if the attacker can determine the original position in
$X$ of $u_{p1}$ and $u_{p2}$ for all values of $p, 1 \leq p \leq
\frac{n}{NH_{size}}$.
\subsubsection{The Case of $NH_{size} = 3$}\mbox{}\\
We look at the simple case of the minimal neighborhood size, $NH_{size} = 3$. In
this case, we have for each value of $p$ the neighborhood $[x_{p1}, x_{p2},
x_{p3}]$. The attacker can only know the location of 1 of these items. Assume,
without loss of generality, that the attacker knows the position of $x_{p1}$,
and as such the original dataset to be $[x_{p1}, u_{p1}, u_{p2}]$ where both
$u_{p1}$ and $u_{p2}$ could be the original positions of $x_{p2}$ and $x_{p3}$.
We now look at the output neighborhood after the NeNDS algorithm. Due to the
restrictions of the NeNDS algorithm which require each item to be relocated and
do not allow swapping between 2 items, the resulting neighborhood $Y_p$ can
only be one of the following permutations:
\begin{enumerate}
  \item $[y_{p1}, y_{p2}, y_{p3}] = [x_{p2}, x_{p3}, x_{p1}]$.
  \item $[y_{p1}, y_{p2}, y_{p3}] = [x_{p3}, x_{p1}, x_{p2}]$.
\end{enumerate}
Any other permutation would entail leaving an item in its original position.
Assume permutation (1). The attacker can determine that the value $y_{p1}$ could
not have originally been in position $u_{p2}$ since this is the current position
of $x_{p1}$ and the algorithm does not allow swapping between 2 items.
Therefore, $u_{p2} = x_{p3}$ and $u_{p1} = x_{p2}$. Assume permutation (2).
The attacker can determine that the value $y_{p1}$ could not have
originally been in position $u_{p1}$ for the same reason, and reaches the same
conclusion - the original order for the neighborhood $p$ is $[x_{p1}, x_{p2},
x_{p3}]$.
\subsubsection{The General Case of any $NH_{size}$}\mbox{}\\
In this section we will show that the knowledge of $NH_{size} - 2$ original
value positions is enough for an attacker to learn the original positions of all
$NH_{size}$ values in a neighborhood. We define $L_o(x)$ and $L_n(x)$ for any
value $x \in X$ to be the original and new location (row) of that value
respectively. Taking some neighborhood $X_p$ in $X$, the attacker knows the
position $L_o(x_{pi})$ for $NH_{size} - 2$ values in $X_p$. For 2 values,
$u_{p1}, u_{p2}$, positions $L_o(u_{p1}), L_o(u_{p2})$ remain unknown. After
obfuscation, all new positions $L_n(y_{pi})$ are known to the attacker. With
this knowledge, since the values in the neighborhood $X_p$ are chosen by their
Euclidean closeness, the attacker learns the 2 values $[u_{p1}, u_{p2}]$ and
their new positions $[L_n(u_{p1}), L_n(u_{p2})]$. There remain 2 possible
original positions $L_o(u_{p1}), L_o(u_{p2})$ between which the attacker cannot
distinguish (i.e. each one of the values could have been at each one of the
possible positions originally).\\
We now examine the new values in $L_o(u_{p1}), L_o(u_{p2})$. There are 2 cases:
either 1 of the values is $u_{p1}$ or $u_{p2}$, or both values are from the
other values in $X_p$ whose original position is known to the attacker. Note
that the case $L_o(u_{p1}) = L_n(u_{p2}), L_o(u_{p2}) = L_n(u_{p1})$ cannot
exist since by definition of the algorithm, no 2 items undergo swapping. We now
show the attack for both cases, resulting in the discovery of the original
positions for $u_{p1}, u_{p2}$.
\paragraph{Case 1}\mbox{} \\
Assume, without loss of generality, that $u_{p1}$ resides in a position whose
original value is unknown, meaning was either $u_{p1}$ or $u_{p2}$. It is easy
to see that $L_n(u_{p1}) = L_o(u_{p2})$ since no item remains in the same
position after obfuscation. In addition, the remaining unknown position is
$L_o(u_p1)$. The attacker now knows the original position of both previously
unknown values.
\paragraph{Case 2}\mbox{} \\
In this case, both $L_o(u_{p1})$ and $L_o(u_{p2})$ now contain values whose
original position were known to the attacker. We arbitrarily define those
positions to be $L_1$ and $L_2$ and their original values $u_1$ and $u_2$
respectively. The attacker can know use the following method to backtrack the
obfuscation path and find the original positions of $u_{p1}$ and $u_{p2}$. We
look at the value currently in $L_1$ and denote this value $y_{p1}$. This was
the item in the obfuscation path immediately before $u_1$. $L_o(y_{p1})$ is
known to the attacker and contains the value that was in the obfuscation path
before $y_{p1}$. Denote this value $y_{p2}$. We now continue this backtracking
of the path by examining the value in $L_o(y_{p2})$ and so on until we reach on
of the values $u_{p1}, u_{p2}$. Since the path is created using a tree structure
which contains no cycles, the first unknown value we will find must correspond
to $u_2$ (as $u_1$ will be that last item found in our backtracking and complete
the path). Assume, without loss of generality, that $u_2 = u_{p1}$. The attacker
now knows that $L_2 = L_o(u_{p1})$ and vice versa. 
\subsection{NeNDS Shortcomings}
In addition to being susceptible to partial knowledge
reconstruction, the NeNDS algorithm has an exponential runtime which
is not suitable for real world applications. It can be shown the
NeNDS algorithm solves the Bottleneck Traveling Salesman problem
(BTSP), which is known to be NP-Complete in the general case. For
some cases of a defined distance function between values in the
database, such as in the case of one dimensional Euclidean distance,
there exists a polynomial time solution producing the same results as the NeNDS algorithm
(see Figure \ref{fig:nends-comparison} and Table
\ref{tbl:table-nends}).
\begin{figure}[hbt!]
\centering
\includegraphics[scale=0.5]{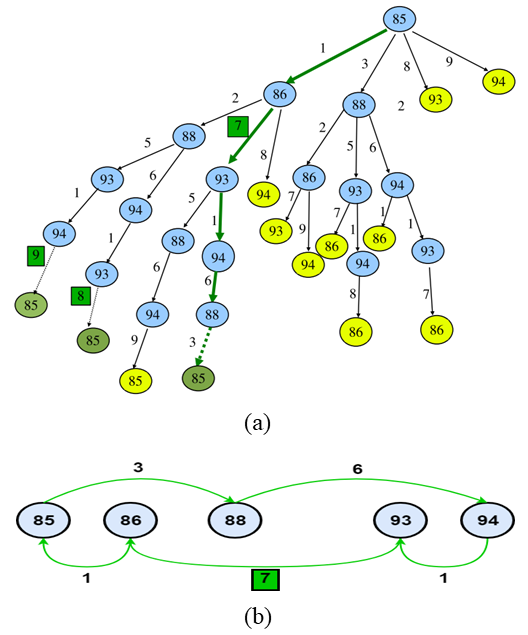}
\caption{Comparison of NeNDS original algorithm to bottleneck TSP for linear
networks. (a) NeNDS tree algorithm - exponential runtime. (b) Bottleneck TSP on
linear graph - linear runtime.}\label{fig:nends-comparison}
\end{figure}
\begin{table}[hbt!]
  \caption{NeNDS transformation result table - same for both algorithms.}
  \label{tbl:table-nends}  
  \begin{tabular}{|c|c|c|}
    \hline 
    Row & Original Value & Transformed Value \\
    \hline 
    $1$&$86$&$93$\\
    \hline 
    $2$&$88$&$85$\\
    \hline 
    $3$&$93$&$94$\\
    \hline 
    $4$&$85$&$86$\\
    \hline 
    $5$&$94$&$88$\\
  \hline 
\end{tabular}
\end{table}
In the general case, there are approximation algorithms for BTSP, such
as given by Kao and Sanghi \cite{btsp-approx} that can be adapted
for the case of NeNDS-like perturbation, giving the same level of
privacy while achieving an approximate level of accuracy.

\section{Conclusions}
With more and more user data being stored by companies and organizations, and a
growing demand to disseminate and share this data, the risks to security and
privacy rise greatly. While some of these issues have been addressed with
encryption and authorization protocols, the potential for misuse of access
still exists. The need for protecting user privacy while still maintaining
utility of databases has given birth to a wide variety of data anonymization
techniques. In this research we have analyzed the behavior, vulnerabilities and
shortcomings of instances of the data perturbation and the query-set-size
control methods. For query-set-size control over a vehicular network (graph
based dataset) we have shown the aggregate average query function to be
vulnerable to private data leakage. On the other hand, the amalgamation of the
minimum, maximum and median query functions allow for user privacy under the
examined model. For the case of data perturbation we have presented a
partial knowledge attack on the NeNDS algorithm. In addition, we prove the
exponential runtime of this algorithm and offer an alternative for the linear
coordinate case.\\
We plan on continued exploration of the privacy attacks described in
Section \ref{sec:combine}. We look to find an upper limit on the number of
allowed queries that will reduce the possibility of leaking private information.
Other directions in this area include analyzing this query behavior with regards
to different data structures, data types and query types. With regards to the
attack described in Section \ref{sec:cf-attack}, we intend to look for
possibilities of adding randomness to the base permutation of the original
data, in order to improve the privacy afforded by NeNDS. This may entail
selecting different random values for each $NH_{size}$ and using a
non-deterministic, non-optimal permutation algorithm for column transformation,
and could come at the cost of a minor loss of data accuracy.

\bibliographystyle{unsrt}
\bibliography{arxiv-privacy}
\end{document}